\newtheorem{proposition}{Proposition}
\newcommand{\highlightcolor}{black}
\begin{document}
\title{\LARGE{Channel Autocorrelation Estimation for IRS-Aided Wireless Communications Based on Power Measurements}
\vspace{-13 pt}}

\author{\IEEEauthorblockN{Ge Yan$^\dag$, Lipeng Zhu$^\dag$, and Rui Zhang$^{\dag\ddag}$}
\IEEEauthorblockA{$^\dag$Department of Electrical and Computer Engineering, National University of Singapore, Singapore 117583.\\
$^{\ddag}$The Chinese University of Hong Kong, Shenzhen, and Shenzhen Research Institute of Big Data, Shenzhen, China 518172. \\
Email: geyan@u.nus.edu, zhulp@nus.edu.sg, elezhang@nus.edu.sg}
\vspace{-30 pt}}

\maketitle

\begin{abstract}
    Intelligent reflecting surface (IRS) can bring significant performance enhancement for wireless communication systems by reconfiguring wireless channels via passive signal reflection. 
    However, such performance improvement generally relies on the knowledge of channel state information (CSI) for IRS-associated links. 
    Prior IRS channel estimation strategies mainly estimate IRS-cascaded channels based on the excessive pilot signals received at the users/base station (BS) with time-varying IRS reflections, which, however, are not compatible with the existing channel training/estimation protocol for cellular networks. 
    To address this issue, we propose in this paper a new channel estimation scheme for IRS-assisted communication systems based on the received signal power measured at the user, which is practically attainable without the need of changing the current protocol. 
    Specifically, due to the lack of signal phase information in power measurements, the autocorrelation matrix of the BS-IRS-user cascaded channel is estimated by solving equivalent matrix-rank-minimization problems. 
    Simulation results are provided to verify the effectiveness of the proposed channel estimation algorithm as well as the IRS passive reflection design based on the estimated channel autocorrelation matrix. 
\end{abstract}

\begin{IEEEkeywords}
    Intelligent reflecting surface (IRS), channel autocorrelation estimation, power measurements, passive reflection. 
\end{IEEEkeywords}

\vspace{-16pt}
\section{INTRODUCTION}\label{sec:introduction}
    \IEEEPARstart{I}{ntelligent} reflecting surface (IRS) has received great attention in recent years due to its promising ability to reconfigure wireless channels. 
    By applying tunable phase shifts to incident wireless signals, IRS can effectively control their propagation channels and thereby significantly enhance the wireless communication performance such as transmission data rate and reliability~\cite{ref:IRSTutorial}. 
    Due to this benefit as well as its high deployment flexibility, low hardware cost and power consumption, IRS has been identified as a key enabling technology for the future sixth-generation (6G) wireless networks~\cite{ref:IRSTutorial}. 
    However, to reap the high performance gain by IRS, it is essential to acquire the channel state information (CSI) for the IRS channels with its assisting base station (BS) and users. 
    This is practically challenging because the passive IRS is not equipped with wireless transceivers, which results in that the BS-IRS and IRS-user channels cannot be separately estimated in general, while only their cascaded (i.e., BS-IRS-user) channel can be estimated~\cite{ref:IRS-Survey}. 
    However, to compensate for the significant
    product-distance path loss of the IRS-cascaded link, the number of IRS reflecting elements needs to be sufficiently large (e.g., hundreds or even thousands~\cite{ref:Physics-based-modeling-IRS}) in practice and their individual cascaded channels are generally different, which may incur high pilot signal overhead for channel estimation in IRS-assisted communication systems. 
    
    Most existing works on IRS-cascaded channel estimation focus on conventional pilot-based methods by exploiting the IRS time-varying reflection~\cite{ref:low-comp-ON-OFF, ref:IRS-OFDM-CE-DFT, ref:ce-bf-discrete}. 
    For example, by switching on only one reflecting element at one time, the authors in~\cite{ref:low-comp-ON-OFF} estimated the IRS-cascaded channel based on the received signals at each user. 
    To exploit the aperture gain of IRS for channel estimation, a discrete-Fourier-transform (DFT) based IRS reflection pattern was proposed in~\cite{ref:IRS-OFDM-CE-DFT} with all reflecting elements switched on for conducting the minimum-mean-square-error (MMSE) estimation of the cascaded channel in an IRS-assisted orthogonal-frequency-division-multiplexing (OFDM) system, while a Hadamard matrix based IRS reflection pattern was designed in~\cite{ref:ce-bf-discrete} under the practical discrete phase shift constraint on IRS reflection. 
    Moreover, IRS channel estimation was formulated as a compressed sensing problem in~\cite{ref:CS-IRS-mmW} by exploiting the sparsity of the channel paths in the angle domain to reduce the number of pilots required. 
    In addition, deep residual network was employed in~\cite{ref:deep-residual-ce} to refine the least-sqaure (LS) estimation of IRS channels. 
    
    In the aforementioned works, CSI is estimated based on the received complex-valued pilot signals at the users/BS. 
    However, in the protocol of existing wireless communication systems such as 4G/5G~\cite{ref:3gpp:38.211}, the pilots are dedicated to estimating the BS-user direct channels only. 
    As such, to estimate the new IRS-cascaded CSI as in the existing works, substantial additional pilots are required, which thus needs to significantly modify the existing channel estimation/training protocol. 
    To tackle this problem, IRS reflection designs based on the received signal power measurements at the users have been proposed, which do not require additional pilot signals for explicit IRS CSI estimation. 
    As user power measurements are commonly adopted and easy to obtain in existing wireless systems, such as reference signal received power (RSRP), this approach can be practically implemented without any change of the current protocol. 
    For example, the authors in~\cite{ref:RFocus} and~\cite{ref:CSM} proposed to design the IRS reflection coefficients based on received signal power measurements with different IRS reflections over time. 
    Specifically, based on power measurements at the user, each IRS reflecting element sets its reflection coefficient which achieves the maximum expectation of the received power conditioned on it, which is called the conditional sample mean (CSM) method in~\cite{ref:CSM}. 
    However, to obtain an accurate estimation of the conditional expectation for CSM, an excessively large amount of IRS training reflections/power measurements are generally required (in the quadratic order of the number of IRS reflecting elements), which is still time-consuming for practical implementation. 
    
    It is worth noting that the CSM-based methods in~\cite{ref:RFocus, ref:CSM} did not fully exploit the power measurements to obtain partial CSI of IRS-cascaded channels, thus resulting in their high overhead for power measurements and low IRS beamforming gain. 
    To improve the existing IRS channel estimation/beamforming design based on user power measurements, this paper proposes a new channel autocorrelation estimation scheme. 
    Specifically, the autocorrelation matrix of IRS-cascaded channel is estimated based on received signal power measurements at the user with randomly generated IRS reflections over time, and then the IRS reflection is designed based on the estimated channel autocorrelation matrix for data transmission. 
    In particular, the channel autocorrelation estimation problem is equivalently transformed into rank-minimization problems and alternating optimization is employed to obtain efficient solutions. 
    Simulation results validate the effectiveness of the proposed channel estimation scheme based on power measurements, and demonstrate the superior performance of IRS reflection design based on the estimated channel autocorrelation matrix compared to other benchmark schemes such as CSM.

\section{System Model}\label{sec:system-model}

        \vspace{-10pt}
        \begin{figure}[ht]
            \begin{center}
                \includegraphics[scale = 0.13]{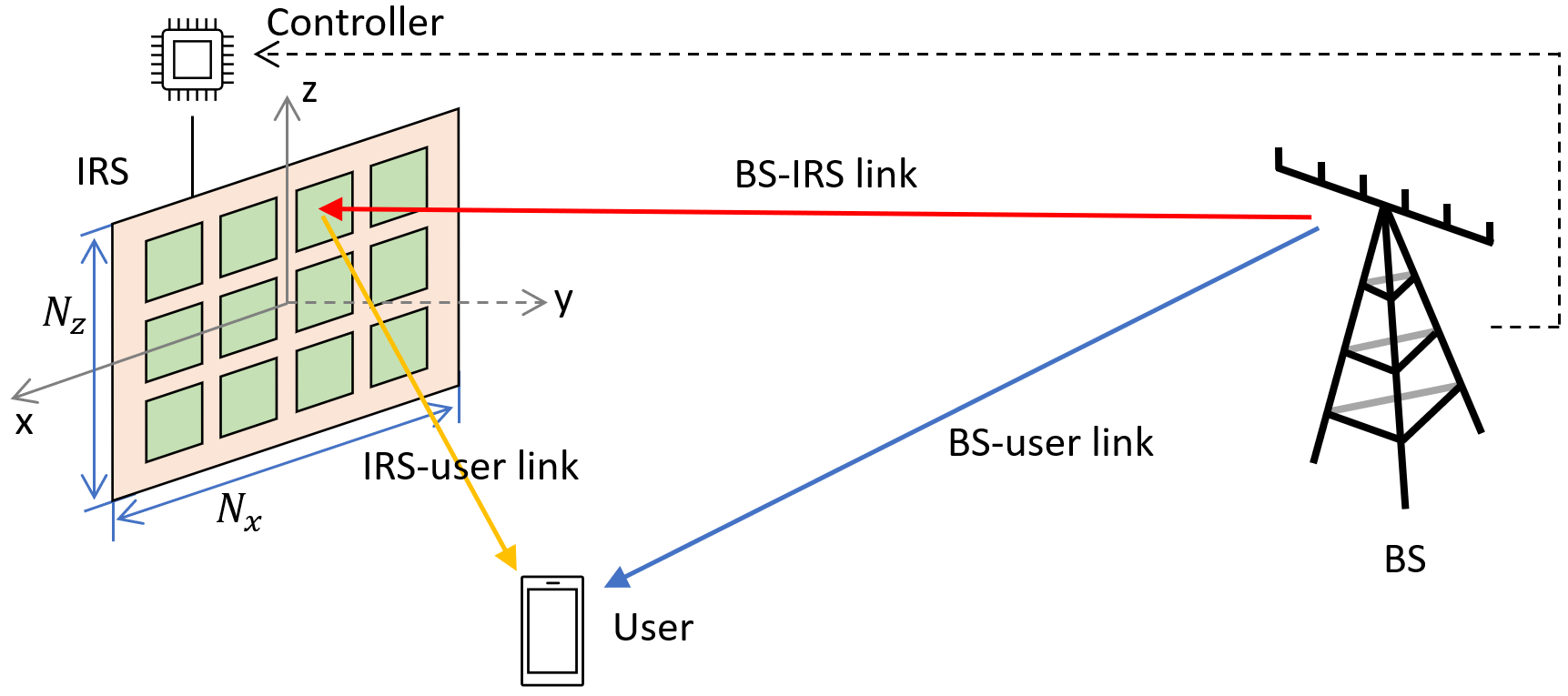}
                \vspace{-7pt}
                \caption{An IRS-aided wireless communication system. }
                \label{Fig:system}
            \end{center}
            \vspace{-15pt}
        \end{figure}
        
        As shown in Fig.~\ref{Fig:system}, we consider an IRS-aided downlink communication system with a single-antenna BS (or equivalently, multi-antenna BS with transmit precoding fixed) serving a single-antenna user, where an IRS is deployed to establish a reflected link to assist in their communication. 
        The IRS is composed of $N_{irs} = N_{x}N_{z}$ reflecting elements, where $N_{x}$ and $N_{z}$ are the number of reflecting elements in the horizontal and vertical dimensions, respectively. 
        Let $u_n$ denote the reflection coefficient of the $n$-th element, $n = 1, 2, \ldots, N_{irs}$, while $\boldsymbol{u} = [u_1, \ldots, u_{N_{irs}}]^T\in\mathbb{C}^{N_{irs}\times 1}$ and $\boldsymbol{\Theta} = \text{diag}(\boldsymbol{u}) \in\mathbb{C}^{N_{irs}\times N_{irs}}$ denote the IRS reflection coefficients vector and matrix, respectively. 
        Due to the unit amplitude constraint on the reflecting coefficients, we have $|u_n| = 1$ for $\forall n$. 
        Furthermore, denote the number of bits for controlling the discrete phase shift of each element as $b$. 
        Then, the reflection coefficient $u_n$ should be selected from a discrete set $\Phi_b = \{e^{j\Delta\theta}, \ldots, e^{j2^b\Delta\theta}\}$, with $\Delta\theta = 2\pi/2^b$. 
        Denoting $\Phi_b^M = \{\boldsymbol{x}\in\mathbb{C}^{M\times 1} | x_n\in\Phi_b, n = 1, \ldots, M\}$ as the set of $M$-dimensional vectors whose elements are selected from $\Phi_b$, we thus have $\boldsymbol{u}\in\Phi_b^{N_{irs}}$. 

        The baseband equivalent channels of the BS-IRS link, BS-user link, and IRS-user link are denoted as $\boldsymbol{g}\in\mathbb{C}^{N_{irs}\times 1}$, $h_d\in\mathbb{C}$, and $\boldsymbol{h}_r\in\mathbb{C}^{N_{irs}\times 1}$, respectively. 
        The received signal at the user is thus given by 
        \begin{equation}\label{def:received-sig}
            y = \left(\boldsymbol{h}_{r}^{H}\boldsymbol{\Theta}\boldsymbol{g} + h_{d}^*\right)s + z, 
        \end{equation}
        where $s$ is the transmitted signal with power $p_0$ and $z\sim\mathcal{CN}(0, \sigma^2)$ denotes the noise. 
        According to~\cite{ref:3gpp:36.214}, the RSRP is measured as the average power of multiple received reference signals, and thus the effect of noise can be mitigated to an arbitrarily low level. 
        Specifically, it is assumed in this paper that the RSRP is obtained by taking the average power over a sufficiently large number of received reference signals, which is given by
        \begin{equation}\label{def:received-power}
            r = \mathbb{E}\left[|y|^2\right] 
            = p_0\left|\left(
                \boldsymbol{h}_{r}^{H}\boldsymbol{\Theta}\boldsymbol{g} + h_{d}^*
            \right)\right|^2 + \sigma^2. 
        \end{equation}
        By deducting the noise power $\sigma^2$ from RSRP, the effect of noise can be removed and the noiseless signal power measurement $p = p_0|(\boldsymbol{h}_{r}^{H}\boldsymbol{\Theta}\boldsymbol{g} + h_{d}^*)|^2$ can be obtained. 
        Due to the relation $\boldsymbol{h}_{r}^{H}\boldsymbol{\Theta} = \boldsymbol{h}_{r}^{H}\text{diag}(\boldsymbol{u}) = \boldsymbol{u}^{H}\text{diag}(\boldsymbol{h}_{r}^{H})$, the power measurement can be rewritten as $p = p_0|\boldsymbol{u}^{H}\text{diag}(\boldsymbol{h}_{r}^{H})\boldsymbol{g} + h_{d}^*|^2$. 
        To further simplify the notation, let $N = N_{irs} + 1$ and define $\bar{\boldsymbol{h}} = \sqrt{p_0}[\boldsymbol{g}^H\text{diag}(\boldsymbol{h}_r), h_d]^H\in\mathbb{C}^{N\times 1}$ as the equivalent channel and $\boldsymbol{v} = [\boldsymbol{u}^T, 1]^T\in\Phi_b^{N}$ as the equivalent IRS reflection vector. 
        Then, the signal power measurement can be represented by $p = \left|\boldsymbol{v}^H\bar{\boldsymbol{h}}\right|^2 = \text{tr}(\bar{\boldsymbol{H}}\boldsymbol{V})$, 
        where $\bar{\boldsymbol{H}} = \bar{\boldsymbol{h}}\bar{\boldsymbol{h}}^H$ and $\boldsymbol{V} = \boldsymbol{v}\boldsymbol{v}^H$ are the autocorrelation matrices of the equivalent channel $\bar{\boldsymbol{h}}$ and the equivalent IRS reflection vector $\boldsymbol{v}$, respectively.

\vspace{-5pt}
\section{Channel Autocorrelation Matrix Estimation}\label{subsec:estimation-framework}
        We assume that the user is quasi-static and its channels with the BS/IRS do not change for a long time, during which the IRS changes its reflection coefficients $T$ times in total and in the meanwhile, the user measures the corresponding received signal power values and feed them back to the BS or some other processing unit that can design the IRS reflection coefficients and send them to the IRS controller for implementation via a separate wireless link. 
        Specifically, for the $t$-th power measurement, $t = 1, \ldots, T$, a random IRS reflection vector $\boldsymbol{u}_t$ is applied. 
        Define $\boldsymbol{v}_t = [\boldsymbol{u}_t, 1]^T$ and let $p_t = |\boldsymbol{v}_t^H\bar{\boldsymbol{h}}|^2 = \text{tr}(\bar{\boldsymbol{H}}\boldsymbol{V}_t)$ denote the signal power measured by the user, with $\boldsymbol{V}_t = \boldsymbol{v}_t\boldsymbol{v}_t^H$. 
        With all the power measurements obtained, $\bar{\boldsymbol{H}}$ is first estimated based on $p_t$ and $\boldsymbol{V}_t$, $t = 1, \ldots, T$. 
        Then, the IRS reflection vector is designed based on the estimated channel autocorrelation matrix to maximize the effective channel gain between the BS and user. 
        
        Given power measurements $\boldsymbol{p} = [p_1, \ldots, p_T]^T$ and $\boldsymbol{v}_t$, $t = 1, \ldots, T$, the channel autocorrelation matrix estimation problem can be formulated as finding a rank-one matrix $\boldsymbol{H}$ that satisfies $\text{tr}(\boldsymbol{H}\boldsymbol{V}_t) = p_t$, $t = 1, \ldots, T$, i.e., 
        \begin{subequations}\label{prob:cov-est-find-origin}
            \begin{align}
                & \text{find} \ {\boldsymbol{H}}\in\mathbb{S}_{+}^{N\times N} \tag{\ref{prob:cov-est-find-origin}} \\
                & ~ \mathrm{s.t.} \ \text{tr}(\boldsymbol{H}\boldsymbol{V}_t) = p_t, t = 1, \ldots, T, \label{prob:cov-est-find-power} \\
                & ~~~~~~ \text{rank}(\boldsymbol{H}) = 1, \label{prob:cov-est-find-rank-one}
            \end{align}
        \end{subequations}
        where $\mathbb{S}_{+}^{N\times N}$ denotes the set of all positive semidefinite hermitian matrices of dimension $N\times N$. 
        The form of Problem~\eqref{prob:cov-est-find-origin} is the same as the PhaseLift problem studied in~\cite{ref:PhaseLift}, where the trace-minimization relaxation was applied to find an approximate solution. 
        However, the performance of the approximate solution relies on the assumption that vectors $\{\boldsymbol{v}_t, t = 1, \ldots, T\}$ are independently and identically distributed (i.i.d.) Guassian random vectors, which are not applicable to IRS due to its unit-amplitude reflection with discrete phase shifts. 
        Thus, new methods are required to solve Problem~\eqref{prob:cov-est-find-origin} efficiently. 

        Next, we analyze the existence and uniqueness of the solution for Problem~\eqref{prob:cov-est-find-origin}. 
        Due to the practical discrete phase shifts for IRS, the solutions for Problem~\eqref{prob:cov-est-find-origin} are different for $b = 1$ and $b \ge 2$, which leads to the following proposition. 

        \begin{proposition}\label{prop:solution-set}
            For sufficiently large $N$ and $T$, Problem~\eqref{prob:cov-est-find-origin} has one unique solution $\bar{\boldsymbol{H}}$ for $b\ge 2$, while for $b = 1$, it has two solutions, i.e., $\bar{\boldsymbol{H}}$ and its conjugate matrix $\bar{\boldsymbol{H}}^*$. 
        \end{proposition}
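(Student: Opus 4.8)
The plan is to recast Problem~\eqref{prob:cov-est-find-origin} as a phase‑retrieval‑type uniqueness question and then exploit the arithmetic of $\Phi_b$. Since any feasible $\boldsymbol{H}$ is positive semidefinite of rank one, write $\boldsymbol{H}=\boldsymbol{x}\boldsymbol{x}^H$ with $\boldsymbol{x}\in\mathbb{C}^{N}$; then~\eqref{prob:cov-est-find-power} reads $|\boldsymbol{v}_t^H\boldsymbol{x}|^2=p_t=|\boldsymbol{v}_t^H\bar{\boldsymbol{h}}|^2$, i.e.\ $\text{tr}(\boldsymbol{D}\boldsymbol{V}_t)=0$ for $t=1,\dots,T$, where $\boldsymbol{D}:=\boldsymbol{x}\boldsymbol{x}^H-\bar{\boldsymbol{h}}\bar{\boldsymbol{h}}^H$. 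As $\bar{\boldsymbol{H}}=\bar{\boldsymbol{h}}\bar{\boldsymbol{h}}^H$ is always feasible and the global phase $e^{j\phi}\bar{\boldsymbol{h}}$ gives the same matrix $\bar{\boldsymbol{H}}$, it suffices to show: for $N,T$ large and a generic channel realization (i.e.\ $\bar{\boldsymbol{h}}$ with no zero entry and not a scalar multiple of a real vector, which holds with probability one), the only rank‑one $\boldsymbol{H}$ satisfying all constraints is $\bar{\boldsymbol{H}}$ when $b\ge 2$, while when $b=1$ there is exactly one more, $\bar{\boldsymbol{H}}^*$.

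First I would pin down the subspace the measurements probe. With randomly drawn reflections and $T$ large, $\{\boldsymbol{V}_t\}$ span $\mathcal{S}:=\text{span}_{\mathbb{R}}\{\boldsymbol{v}\boldsymbol{v}^H:\boldsymbol{v}\in\Phi_b^{N-1}\times\{1\}\}$ (with probability one, since $\Phi_b^{N-1}$ is a finite set), so $\boldsymbol{D}\in\mathcal{S}^{\perp}$ inside the real inner‑product space of $N\times N$ Hermitian matrices with $\langle\boldsymbol{A},\boldsymbol{B}\rangle=\text{tr}(\boldsymbol{A}\boldsymbol{B})$. Every $\boldsymbol{v}\boldsymbol{v}^H$ has all‑ones diagonal, and an elementary computation---using, for $b\ge 2$, that $\{1,-1,j,-j\}\subseteq\Phi_b$ to isolate each off‑diagonal position, and, for $b=1$, that every $\boldsymbol{v}$ is real so that $\boldsymbol{v}^H\boldsymbol{D}\boldsymbol{v}=\boldsymbol{v}^T\text{Re}(\boldsymbol{D})\boldsymbol{v}$---yields: $\mathcal{S}^{\perp}$ is the set of real, zero‑trace diagonal matrices when $b\ge 2$, and the set of Hermitian matrices with purely imaginary off‑diagonal entries and real, zero‑trace diagonal when $b=1$. (This requires $N$ above a small absolute constant so the spanning families are rich enough.)

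The case $b\ge 2$ then closes at once: with $\mathcal{S}^{\perp}$ consisting of real diagonal matrices, $\boldsymbol{D}=\boldsymbol{x}\boldsymbol{x}^H-\bar{\boldsymbol{h}}\bar{\boldsymbol{h}}^H$ is diagonal, whence $x_ix_j^*=\bar{h}_i\bar{h}_j^*$ for all $i\ne j$, so $\boldsymbol{x}=\gamma\bar{\boldsymbol{h}}$ for a scalar $\gamma$, and $|\gamma|=1$ (from $\text{tr}(\boldsymbol{D})=0$, or already from the off‑diagonal relations when $N\ge 3$), giving $\boldsymbol{H}=\bar{\boldsymbol{H}}$. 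For $b=1$, I would first check directly that $\bar{\boldsymbol{H}}^*$ is feasible: each $\boldsymbol{v}_t$ being real, $|\boldsymbol{v}_t^H\bar{\boldsymbol{h}}^*|^2=|\overline{\boldsymbol{v}_t^T\bar{\boldsymbol{h}}}|^2=p_t$. For the converse, the above description of $\mathcal{S}^{\perp}$ says the measurements determine every \emph{off-diagonal} entry (and the trace) of the rank‑$\le 2$ real PSD matrix $\boldsymbol{M}:=\text{Re}(\bar{\boldsymbol{h}})\text{Re}(\bar{\boldsymbol{h}})^T+\text{Im}(\bar{\boldsymbol{h}})\text{Im}(\bar{\boldsymbol{h}})^T$, and any rank‑one solution $\boldsymbol{x}\boldsymbol{x}^H$ produces a rank‑$\le 2$ real PSD matrix $\boldsymbol{M}'=\text{Re}(\boldsymbol{x})\text{Re}(\boldsymbol{x})^T+\text{Im}(\boldsymbol{x})\text{Im}(\boldsymbol{x})^T$ with the same off‑diagonal entries. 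A standard low‑rank‑completion/rigidity argument for the complete graph shows that, for a generic rank‑two $\boldsymbol{M}$ and $N$ large enough, a rank‑$\le 2$ PSD matrix is determined by its off‑diagonal entries; hence $\boldsymbol{M}'=\boldsymbol{M}$, the factorizations $[\,\text{Re}\,\boldsymbol{x}\ |\ \text{Im}\,\boldsymbol{x}\,]$ and $[\,\text{Re}\,\bar{\boldsymbol{h}}\ |\ \text{Im}\,\bar{\boldsymbol{h}}\,]$ differ by a matrix in $O(2)$ on the right, i.e.\ $\boldsymbol{x}=e^{j\theta}\bar{\boldsymbol{h}}$ (rotation) or $\boldsymbol{x}=e^{j\alpha}\bar{\boldsymbol{h}}^*$ (reflection), giving exactly $\bar{\boldsymbol{H}}$ and $\bar{\boldsymbol{H}}^*$, which are distinct because $\bar{\boldsymbol{h}}$ is (generically) not a scalar multiple of a real vector.

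The crux is this rigidity step for $b=1$: upgrading ``$\boldsymbol{M}$ and $\boldsymbol{M}'$ agree off the diagonal'' to ``$\boldsymbol{M}'=\boldsymbol{M}$''. For small $N$ it genuinely fails---there are spurious completions corresponding to nonzero diagonal perturbations of $\boldsymbol{D}$---so the argument must combine genericity of the channel with a sufficient number of measurement indices; a count of the off‑diagonal Gram constraints against the dimension of the $O(2)$‑orbit suggests $N\ge 5$ (equivalently $N_{irs}\ge 4$) is enough. The identification of $\mathcal{S}$ and $\mathcal{S}^{\perp}$, and the claim that the random reflections eventually span $\mathcal{S}$, are routine by comparison.
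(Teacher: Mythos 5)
Your route is genuinely different from the paper's own argument, which is only a sketch: it observes that the constraints~\eqref{prob:cov-est-find-power} cut out an affine subspace of the real vector space of Hermitian matrices, verifies via~\eqref{eq:binary-power} that $\bar{\boldsymbol{H}}^*$ is feasible when $b=1$, and then simply asserts (``it can be proved that\ldots'') that this affine subspace touches the rank-one manifold at exactly one point for $b\ge 2$ and exactly two points for $b=1$. You instead identify precisely what the power measurements determine, by computing the orthogonal complement of $\mathrm{span}_{\mathbb{R}}\{\boldsymbol{v}\boldsymbol{v}^H\}$: traceless real diagonal matrices for $b\ge 2$ (using $\pm1,\pm j\in\Phi_b$ and sign flips of single entries), and traceless diagonal plus purely imaginary off-diagonal matrices for $b=1$. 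Both characterizations are correct (the last entry of $\boldsymbol{v}$ being pinned to $1$ is handled by Hermitian symmetry), and for $b\ge 2$ this closes the argument essentially completely: $x_ix_k^*=\bar h_i\bar h_k^*$ off the diagonal forces $\boldsymbol{x}=\gamma\bar{\boldsymbol{h}}$ with $|\gamma|=1$ for a generic channel, so the unique rank-one feasible matrix is $\bar{\boldsymbol{H}}$. That is \emph{more} than the paper establishes for $b\ge 2$; the genericity of $\bar{\boldsymbol{h}}$ and the requirement that the random $\boldsymbol{V}_t$ eventually span the measurement subspace are caveats, but they are implicit in the paper's ``sufficiently large $N$ and $T$'' as well.

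The gap is where you say it is, in the $b=1$ case: the claim that a generic rank-two real PSD matrix is determined, within the rank-$\le 2$ PSD cone, by its off-diagonal entries (plus trace) is invoked as a ``standard low-rank-completion/rigidity argument'' but not proved, and it is not an off-the-shelf citation-free fact. It fails outright for small $N$ (for $N=3$ there is a continuum of completions), it requires genericity of $\bar{\boldsymbol{h}}$, and your dimension count of constraints against the $O(2)$-orbit only suggests local rigidity or finiteness of solutions, not global uniqueness up to the orbit --- which is exactly what is needed to conclude that the only rank-one solutions are $e^{j\theta}\bar{\boldsymbol{h}}$ and $e^{j\alpha}\bar{\boldsymbol{h}}^*$, i.e., $\bar{\boldsymbol{H}}$ and $\bar{\boldsymbol{H}}^*$. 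To be fair, this unproven lemma is precisely the counterpart of the tangency claim the paper itself leaves unproven, so your write-up is no less complete than the published sketch (and cleaner, since the $O(2)$-factorization step correctly converts the rigidity statement into the two-solution conclusion); but to actually prove the proposition you would need to supply this completion/rigidity lemma, either by a direct algebraic argument for $N$ above an explicit threshold or by citing a precise uniqueness result for real-valued (e.g., binary-mask) phase retrieval.
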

        
        \begin{proof}[Proof (sketched)\textup{:}\nopunct]\label{proof:solution-set}
            The set of $(N\times N)$-dimensional hermitian matrices forms an $N^2$-dimensional linear space. 
            Thus, the power constraints in~\eqref{prob:cov-est-find-power}, which are linear to matrix $\boldsymbol{H}$, define an affine subspace in the hermitian matrix space. 

            For $b\ge 2$, $\boldsymbol{v}_t\in\Phi_b^{N}$, $t = 1, \ldots, T$, is a complex vector. 
            For sufficiently large $N$ and $T$, it can be proved that the affine subspace confined by~\eqref{prob:cov-est-find-power} is tangent to the manifold of rank-one hermitian matrices defined by~\eqref{prob:cov-est-find-rank-one} at one point, $\bar{\boldsymbol{H}}$, in the hermitian matrix space, which means that $\bar{\boldsymbol{H}}$ is the only rank-one hermitian matrix feasible to constraints~\eqref{prob:cov-est-find-power}. 

            For $b = 1$, $\boldsymbol{v}_t\in\Phi_1^{N} = \{\pm 1\}^N$, $t = 1, \ldots, T$, is always a real vector. 
            By applying eigenvalue decomposition, any positive semidefinite hermitian matrix $\boldsymbol{H}$ can be written as $\boldsymbol{H} = \boldsymbol{Q}\boldsymbol{\Lambda}\boldsymbol{Q}^H$, where $\boldsymbol{Q}\in\mathbb{C}^{N\times N}$ is a unitary matrix and $\boldsymbol{\Lambda} = \text{diag}(\alpha_1, \alpha_2, \ldots, \alpha_N)$ with $\alpha_1, \ldots, \alpha_N \ge 0$. 
            Define $\boldsymbol{\Sigma} = \boldsymbol{Q}\text{diag}(\sqrt{\alpha_1}, \sqrt{\alpha_2}, \ldots, \sqrt{\alpha_N})$, and $\boldsymbol{\Sigma}_{\text{re}} = \text{Re}(\boldsymbol{\Sigma})$ and $\boldsymbol{\Sigma}_{\text{im}} = \text{Im}(\boldsymbol{\Sigma})$ as the real and imaginary parts of $\boldsymbol{\Sigma}$. 
            Then, we have $\boldsymbol{H} = \boldsymbol{\Sigma}\boldsymbol{\Sigma}^H = (\boldsymbol{\Sigma}_{\text{re}}\boldsymbol{\Sigma}_{\text{re}}^T + \boldsymbol{\Sigma}_{\text{im}}\boldsymbol{\Sigma}_{\text{im}}^T) + j(\boldsymbol{\Sigma}_{\text{im}}\boldsymbol{\Sigma}_{\text{re}}^T - \boldsymbol{\Sigma}_{\text{re}}\boldsymbol{\Sigma}_{\text{im}}^T)$ and thus
            \begin{equation}\label{eq:binary-power}
                \begin{aligned}
                    \text{tr}(\boldsymbol{H}\boldsymbol{V}_t) & = \|\boldsymbol{\Sigma}^H\boldsymbol{v}_t\|_2^2 = \|\boldsymbol{\Sigma}_{\text{re}}^T\boldsymbol{v}_t - j\boldsymbol{\Sigma}_{\text{im}}^T\boldsymbol{v}_t\|_2^2 \\
                    & = \|\boldsymbol{\Sigma}_{\text{re}}^H\boldsymbol{v}_t\|_2^2 + \|\boldsymbol{\Sigma}_{\text{im}}^H\boldsymbol{v}_t\|_2^2 = \text{tr}(\boldsymbol{H}_r\boldsymbol{V}_t), 
                \end{aligned}
            \end{equation}
            where $\boldsymbol{H}_r = \text{Re}(\boldsymbol{H}) = \boldsymbol{\Sigma}_{\text{re}}\boldsymbol{\Sigma}_{\text{re}}^T + \boldsymbol{\Sigma}_{\text{im}}\boldsymbol{\Sigma}_{\text{im}}^T$ is the real part of $\boldsymbol{H}$. 
            This means that the imaginary part of $\boldsymbol{H}$ does not influence the received signal power at the user, i.e., $p_t$. 
            Thus, for any solution $\hat{\boldsymbol{H}}$ for Problem~\eqref{prob:cov-est-find-origin}, its conjugate $\hat{\boldsymbol{H}}^*$ is also a solution because it satisfies $\text{tr}(\hat{\boldsymbol{H}}^*\boldsymbol{V}_t) = \text{tr}(\hat{\boldsymbol{H}}_r\boldsymbol{V}_t) = \text{tr}(\hat{\boldsymbol{H}}\boldsymbol{V}_t) = p_t$ for $\forall t$ and is a rank-one matrix, with $\hat{\boldsymbol{H}}_r = \text{Re}(\hat{\boldsymbol{H}}) = \text{Re}(\hat{\boldsymbol{H}}^{*})$. 
            For sufficiently large $N$ and $T$, it can be proved that the affine subspace confined by~\eqref{prob:cov-est-find-power} is tangent to the manifold of rank-one hermitian matrices defined by~\eqref{prob:cov-est-find-power} at exactly two points, $\bar{\boldsymbol{H}}$ and $\bar{\boldsymbol{H}}^*$, in the hermitian matrix space. 
        \end{proof}

        As the solutions for $b = 1$ and $b\ge 2$ are different, we solve Problem~\eqref{prob:cov-est-find-origin} for these two cases separately. 
        For the case of $b\ge 2$, we derive the unique autocorrelation matrix $\bar{\boldsymbol{H}}$ by directly solving Problem~\eqref{prob:cov-est-find-origin}. 
        For the case of $b = 1$, the autocorrelation matrix cannot be uniquely determined, but $\bar{\boldsymbol{H}}_r = \text{Re}(\bar{\boldsymbol{H}}) = \text{Re}(\bar{\boldsymbol{H}}^*)$ is unique according to the proof of Proposition~\ref{prop:solution-set}. 
        Since the IRS reflection vector $\boldsymbol{u}$, as well as $\boldsymbol{v}$, is always a real vector for $b = 1$, the received signal power at the user can be written as $p = \text{tr}(\bar{\boldsymbol{H}}\boldsymbol{V}) = \text{tr}(\bar{\boldsymbol{H}}_r\boldsymbol{V})$. 
        Therefore, we only need to estimate $\bar{\boldsymbol{H}}_r$ for optimizing the IRS reflection vector for data transmission. 
        As such, for $b = 1$, we consider the following problem:
        \begin{subequations}\label{prob:cov-est-find-rank-two}
            \begin{align}
                & \text{find} \ {\boldsymbol{H}_r}\in\mathbb{S}_{+}^{N\times N}\cap\mathbb{R}^{N\times N} \tag{\ref{prob:cov-est-find-rank-two}} \\
                & ~ \mathrm{s.t.} \ \text{tr}(\boldsymbol{H}_r\boldsymbol{V}_t) = p_t, \ t = 1, \ldots, T, \label{prob:cov-est-find-rank-two-power} \\
                & ~~~~~~ \text{rank}(\boldsymbol{H}_r) \le 2. \label{prob:cov-est-find-rank-two-rank}
            \end{align}
        \end{subequations}
        The following proposition ensures the existence and uniqueness of the solution for Problem~\eqref{prob:cov-est-find-rank-two}. 
        \begin{proposition}\label{prop:solution-set-cov-real}
            For sufficiently large $N$ and $T$, Problem~\eqref{prob:cov-est-find-rank-two} has one unique solution $\bar{\boldsymbol{H}}_r$. 
        \end{proposition}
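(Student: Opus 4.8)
The plan is to reduce Problem~\eqref{prob:cov-est-find-rank-two} to the PhaseLift-type Problem~\eqref{prob:cov-est-find-origin}, which has already been resolved for $b=1$ in Proposition~\ref{prop:solution-set}, and then to read off both existence and uniqueness. The guiding observation is that, since for $b=1$ every reflection vector is real, any matrix feasible for the relaxed real rank-two problem can be \emph{lifted} to a rank-one complex-valued matrix feasible for Problem~\eqref{prob:cov-est-find-origin} with the \emph{same} real part; hence the real feasible set of Problem~\eqref{prob:cov-est-find-rank-two} cannot contain more points than the (single) real part of the solution set of Problem~\eqref{prob:cov-est-find-origin}.

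Concretely, I would first check feasibility of the candidate: $\bar{\boldsymbol{H}}_r=\text{Re}(\bar{\boldsymbol{H}})$ is real and symmetric, it equals $\text{Re}(\bar{\boldsymbol{h}})\text{Re}(\bar{\boldsymbol{h}})^T+\text{Im}(\bar{\boldsymbol{h}})\text{Im}(\bar{\boldsymbol{h}})^T$ and hence has rank at most two, it is positive semidefinite because $\boldsymbol{x}^T\bar{\boldsymbol{H}}_r\boldsymbol{x}=\boldsymbol{x}^H\bar{\boldsymbol{H}}\boldsymbol{x}\ge 0$ for every real $\boldsymbol{x}$ (the imaginary part of $\bar{\boldsymbol{H}}$ being real skew-symmetric contributes nothing), and $\text{tr}(\bar{\boldsymbol{H}}_r\boldsymbol{V}_t)=\text{tr}(\bar{\boldsymbol{H}}\boldsymbol{V}_t)=p_t$ by~\eqref{eq:binary-power}. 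For uniqueness, let $\boldsymbol{H}_r$ be any feasible point of Problem~\eqref{prob:cov-est-find-rank-two}. Using its eigendecomposition, write $\boldsymbol{H}_r=\boldsymbol{a}\boldsymbol{a}^T+\boldsymbol{b}\boldsymbol{b}^T$ with real vectors $\boldsymbol{a},\boldsymbol{b}$, and set $\boldsymbol{c}=\boldsymbol{a}+j\boldsymbol{b}$ and $\boldsymbol{C}=\boldsymbol{c}\boldsymbol{c}^H$. Then $\text{Re}(\boldsymbol{C})=\boldsymbol{a}\boldsymbol{a}^T+\boldsymbol{b}\boldsymbol{b}^T=\boldsymbol{H}_r$, and because each $\boldsymbol{V}_t=\boldsymbol{v}_t\boldsymbol{v}_t^T$ is real symmetric while $\text{Im}(\boldsymbol{C})$ is real skew-symmetric, $\text{tr}(\boldsymbol{C}\boldsymbol{V}_t)=\text{tr}(\text{Re}(\boldsymbol{C})\boldsymbol{V}_t)=\text{tr}(\boldsymbol{H}_r\boldsymbol{V}_t)=p_t$. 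Thus $\boldsymbol{C}$ is a rank-one positive semidefinite Hermitian matrix feasible for Problem~\eqref{prob:cov-est-find-origin}, so by Proposition~\ref{prop:solution-set} (the $b=1$ case) $\boldsymbol{C}\in\{\bar{\boldsymbol{H}},\bar{\boldsymbol{H}}^*\}$; since $\text{Re}(\bar{\boldsymbol{H}})=\text{Re}(\bar{\boldsymbol{H}}^*)$, this forces $\boldsymbol{H}_r=\text{Re}(\boldsymbol{C})=\bar{\boldsymbol{H}}_r$, which together with the feasibility established above gives the claimed unique solution.

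I expect the genuine difficulty to lie entirely in the part already absorbed into Proposition~\ref{prop:solution-set}: one must still justify that, for sufficiently large $N$ and $T$, the affine subspace defined by~\eqref{prob:cov-est-find-power} meets the rank-one Hermitian variety in exactly $\{\bar{\boldsymbol{H}},\bar{\boldsymbol{H}}^*\}$. Because the vectors $\boldsymbol{v}_t$ here are unit-modulus and $\{\pm1\}$-valued rather than i.i.d.\ Gaussian, this cannot be quoted from the PhaseLift analysis of~\cite{ref:PhaseLift} and instead needs a dimension-counting/transversality argument tailored to this structured measurement ensemble; equivalently, one could argue directly that the power-constraint subspace is tangent to the variety of real symmetric positive semidefinite matrices of rank at most two at the single point $\bar{\boldsymbol{H}}_r$. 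A minor loose end in the reduction above is the degenerate case $\boldsymbol{c}=\boldsymbol{0}$, i.e.\ $\boldsymbol{H}_r=\boldsymbol{0}$, which is excluded as long as $\bar{\boldsymbol{h}}\neq\boldsymbol{0}$ so that the measurements $p_t$ are not all zero.
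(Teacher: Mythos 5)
Your proposal is correct and follows essentially the same route as the paper: feasibility of $\bar{\boldsymbol{H}}_r$ plus the lift of any feasible real rank-$\le 2$ matrix $\boldsymbol{H}_r=\boldsymbol{a}\boldsymbol{a}^T+\boldsymbol{b}\boldsymbol{b}^T$ to the rank-one complex matrix $(\boldsymbol{a}+j\boldsymbol{b})(\boldsymbol{a}+j\boldsymbol{b})^H$, followed by an appeal to Proposition~\ref{prop:solution-set} and the identity $\text{Re}(\bar{\boldsymbol{H}})=\text{Re}(\bar{\boldsymbol{H}}^*)$. Your remarks on the degenerate case $\boldsymbol{c}=\boldsymbol{0}$ and on the unproven tangency claim inside Proposition~\ref{prop:solution-set} are fair observations (the paper glosses over both), but they do not change the substance of the argument.
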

        \begin{proof}[Proof\textup{:}\nopunct]
            Obviously, $\bar{\boldsymbol{H}}_r$ is a solution for Problem~\eqref{prob:cov-est-find-rank-two}, which guarantees the existence of the solution. 
            For the uniqueness, any solution for Problem~\eqref{prob:cov-est-find-rank-two}, denoted by $\hat{\boldsymbol{H}}_r$, is symmetric and semidefinite with $\text{rank}(\hat{\boldsymbol{H}}_r) \le 2$. 
            By leveraging eigenvalue decomposition, we have $\hat{\boldsymbol{H}}_r = \boldsymbol{Q}_r\text{diag}(\alpha_1, \alpha_2)\boldsymbol{Q}_r^T$, with $\boldsymbol{Q}_r = [\boldsymbol{q}_{r1}, \boldsymbol{q}_{r2}]\in\mathbb{R}^{N\times 2}$ and $\alpha_1, \alpha_2 \ge 0$. 
            Denote $\hat{\boldsymbol{h}}_1 = \sqrt{\alpha_1}\boldsymbol{q}_{r1}, \hat{\boldsymbol{h}}_2 = \sqrt{\alpha_2}\boldsymbol{q}_{r2}\in\mathbb{R}^{N\times 1}$, then $\hat{\boldsymbol{H}}_r = \hat{\boldsymbol{h}}_1\hat{\boldsymbol{h}}_1^T + \hat{\boldsymbol{h}}_2\hat{\boldsymbol{h}}_2^T$. 
            Consider $\hat{\boldsymbol{h}} = \hat{\boldsymbol{h}}_1 + j\hat{\boldsymbol{h}}_2$ and $\hat{\boldsymbol{H}} = \hat{\boldsymbol{h}}\hat{\boldsymbol{h}}^H$. 
            Obviously, we have $\hat{\boldsymbol{H}}_r = \text{Re}(\hat{\boldsymbol{H}})$, $\text{rank}(\hat{\boldsymbol{H}}) = 1$ and $\text{tr}(\hat{\boldsymbol{H}}\boldsymbol{V}_t) = \text{tr}(\hat{\boldsymbol{H}}_r\boldsymbol{V}_t) = p_t$, $t = 1, \ldots, T$. 
            Thus, $\hat{\boldsymbol{H}}$ is a solution for Problem~\eqref{prob:cov-est-find-origin}, which means $\hat{\boldsymbol{H}} = \bar{\boldsymbol{H}}$ or $\bar{\boldsymbol{H}}^{*}$ according to Proposition~\ref{prop:solution-set}. 
            Then, we have $\hat{\boldsymbol{H}}_r = \text{Re}(\hat{\boldsymbol{H}}) = \bar{\boldsymbol{H}}_r$, which is uniquely determined. 
        \end{proof}

\vspace{-3pt}
\section{Proposed Solutions}\label{sec:proposed-RX}
    In this section, the channel autocorrelation matrix estimation problems for $b\ge 2$ and $b = 1$, i.e., Problems~\eqref{prob:cov-est-find-origin} and~\eqref{prob:cov-est-find-rank-two}, are respectively solved by transforming them into equivalent matrix-rank-minimization problems.

\subsection{Solution for $b\ge 2$}\label{subsec:ratio-max-est-multiary}
    For $b\ge 2$, instead of solving Problem~\eqref{prob:cov-est-find-origin} directly, we consider the following rank-minimization problem: 
    \begin{subequations}\label{prob:cov-est-rank-min}
        \allowdisplaybreaks
        \begin{align}
            & \mathop{\min_{\boldsymbol{H}}} \ \text{rank}(\boldsymbol{H}) \tag{\ref{prob:cov-est-rank-min}} \\
            & ~ \mathrm{s.t.} \ \text{tr}(\boldsymbol{H}\boldsymbol{V}_t) = p_t, \ t = 1, \ldots, T, \label{prob:cov-est-rank-min-power} \\
            & ~~~~~~ {\boldsymbol{H}}\in\mathbb{S}_{+}^{N\times N}. \label{prob:cov-est-rank-min-semidefinite}
        \end{align}
    \end{subequations}
    The equivalence between Problems~\eqref{prob:cov-est-find-origin} and~\eqref{prob:cov-est-rank-min} is analyzed as follows. 
    As the channel autocorrelation matrix $\bar{\boldsymbol{H}}$ is feasible to Problem~\eqref{prob:cov-est-rank-min}, any optimal solution for this problem, denoted by $\hat{\boldsymbol{H}}$, should satisfy $0 < \text{rank}(\hat{\boldsymbol{H}}) \le \text{rank}(\bar{\boldsymbol{H}}) = 1$, which leads to $\text{rank}(\hat{\boldsymbol{H}}) = 1$ and thus $\hat{\boldsymbol{H}}$ is also a solution for Problem~\eqref{prob:cov-est-find-origin}. 
    Reversely, any solution $\hat{\boldsymbol{H}}'$ for Problem~\eqref{prob:cov-est-find-origin} is feasible to the rank-minimization problem~\eqref{prob:cov-est-rank-min} and satisfies $\text{rank}(\hat{\boldsymbol{H}}') = 1$, which indicates that $\hat{\boldsymbol{H}}'$ is an optimal solution for Problem~\eqref{prob:cov-est-rank-min}. 
    Thus, matrix $\bar{\boldsymbol{H}}$ can be equivalently estimated by solving the rank-minimization problem~\eqref{prob:cov-est-rank-min}. 
    
    Since $\boldsymbol{H}\in\mathbb{S}_{+}^{N\times N}$ is nonzero, all the eigenvalues of $\boldsymbol{H}$ are real and non-negative, and $\text{tr}(\boldsymbol{H}) > 0$. 
    Define the eigenvalue-ratio function for matrix $\boldsymbol{H}$ as
    \begin{equation}\label{def:lambda-ratio-func}
        g(\boldsymbol{H}) = \frac{\lambda_{\text{max}}(\boldsymbol{H})}{\text{tr}(\boldsymbol{H})}, 
    \end{equation}
    where $\lambda_{\text{max}}(\boldsymbol{H})$ is the largest eigenvalue of $\boldsymbol{H}$. 
    Obviously, $0 < g(\boldsymbol{H})\le 1$ for any nonzero $\boldsymbol{H}\in\mathbb{S}_{+}^{N\times N}$, and it is easy to verify that $\text{rank}(\boldsymbol{H}) = 1$ if and only if $g(\boldsymbol{H}) = 1$. 
    As we have mentioned above, any solution $\hat{\boldsymbol{H}}$ for Problem~\eqref{prob:cov-est-rank-min} satisfies $\text{rank}(\hat{\boldsymbol{H}}) = 1$. 
    Thus, we have $g(\hat{\boldsymbol{H}}) = 1$, which means that $\hat{\boldsymbol{H}}$ maximizes the eigenvalue-ratio function $g(\boldsymbol{H})$. 
    On the other hand, any matrix $\hat{\boldsymbol{H}}'$ that maximizes $g(\boldsymbol{H})$ subject to constraints~\eqref{prob:cov-est-rank-min-power} and~\eqref{prob:cov-est-rank-min-semidefinite} also minimizes $\text{rank}(\boldsymbol{H})$. 
    Therefore, the solutions for Problem~\eqref{prob:cov-est-rank-min} are the same as the solutions that maximize the eigenvalue-ratio function $g(\boldsymbol{H})$ subject to constraints~\eqref{prob:cov-est-rank-min-power} and~\eqref{prob:cov-est-rank-min-semidefinite}. 
    Note that $\lambda_{\text{max}}(\boldsymbol{H}) = \mathop{\max_{\|\boldsymbol{x}\|_2\le 1}}{\boldsymbol{x}^H\boldsymbol{H}\boldsymbol{x}}$. 
    Then, Problem~\eqref{prob:cov-est-rank-min} can be rewritten as 
    \begin{equation}\label{prob:cov-est-maxmax}
        \mathop{\max_{\boldsymbol{H}}\max_{\|\boldsymbol{x}\|_2 \le 1}} \ f(\boldsymbol{H}, \boldsymbol{x}) = \frac{\boldsymbol{x}^H\boldsymbol{H}\boldsymbol{x}}{\text{tr}(\boldsymbol{H})}, 
        ~~ \mathrm{s.t.} \ \eqref{prob:cov-est-rank-min-power},\eqref{prob:cov-est-rank-min-semidefinite}, 
    \end{equation}
    This optimization problem is non-convex, while alternating optimization can be employed to obtain a suboptimal solution for it. 
    Given $\boldsymbol{H}$, vector $\boldsymbol{x}$ can be easily optimized by applying eigenvalue decomposition for $\boldsymbol{H}$. 
    An optimal $\boldsymbol{x}$ is the normalized eigenvector of $\boldsymbol{H}$ corresponding to the largest eigenvalue. 
    Given $\boldsymbol{x}$, the optimization of $\boldsymbol{H}$ is simplified as 
    \begin{equation}\label{prob:cov-est-maxmax-cov}
            \mathop{\max_{\boldsymbol{H}}} \ \frac{\text{tr}(\boldsymbol{H}\boldsymbol{X})}{\text{tr}(\boldsymbol{H})}, 
            ~~ \mathrm{s.t.} \ \eqref{prob:cov-est-rank-min-power},\eqref{prob:cov-est-rank-min-semidefinite},
    \end{equation}
    with $\boldsymbol{X} = \boldsymbol{x}\boldsymbol{x}^H$. 
    This is a fractional programming problem and can be transformed into a convex optimization problem. 
    Specifically, define $\boldsymbol{G} = \boldsymbol{H}/\text{tr}(\boldsymbol{H})$ and $\gamma = 1/\text{tr}(\boldsymbol{H})$. 
    Then, Problem~\eqref{prob:cov-est-maxmax-cov} can be transformed into
    \begin{subequations}\label{prob:cov-est-maxmax-frac-progm}
        \allowdisplaybreaks
        \begin{align}
            & \mathop{\max_{\boldsymbol{G}, \gamma > 0}} \ \text{tr}(\boldsymbol{G}\boldsymbol{X}) \tag{\ref{prob:cov-est-maxmax-frac-progm}} \\
            & ~ \mathrm{s.t.} \ \text{tr}(\boldsymbol{G}\boldsymbol{V}_t) = p_t\gamma, \ t = 1, \ldots, T, \label{prob:cov-est-maxmax-frac-progm-power} \\
            & ~~~~~~ {\boldsymbol{G}}\in\mathbb{S}_{+}^{N\times N}, \ \text{tr}(\boldsymbol{G}) = 1, \label{prob:cov-est-maxmax-frac-progm-semidefinite}
        \end{align}
    \end{subequations}
    which is convex and thus can be solved by CVX~\cite{ref:Boyd-cvx}. 
    Denoting the optimal solution for Problem~\eqref{prob:cov-est-maxmax-frac-progm} as $\hat{\boldsymbol{G}}$ and $\hat{\gamma}$, then the optimal solution for Problem~\eqref{prob:cov-est-maxmax-cov} is obtained as $\hat{\boldsymbol{H}} = \hat{\boldsymbol{G}}/\hat{\gamma}$. 
    By solving $\boldsymbol{x}$ and $\boldsymbol{H}$ alternatively, an approximately rank-one matrix can be obtained when $g(\boldsymbol{H})$ exceeds a predefined ratio threshold $\epsilon\in(0, 1)$. 

    \vspace{-6pt}
    \begin{algorithm}[h]
        \caption{Proposed solution for Problem~\eqref{prob:cov-est-find-origin}. }
        \begin{algorithmic}[1]\label{alg:RX-est-multiary}
            \REQUIRE~$\{\boldsymbol{v}_t$, $t = 1, \ldots, T\}$, $\boldsymbol{p}\in\mathbb{R}^{T\times 1}$, ratio threshold $\epsilon$. 
            \STATE~Initialization: Solve $\boldsymbol{H}^{(0)}$ via trace-minimization relaxation~\cite{ref:PhaseLift}; iteration index $i\gets 1$. 
            \WHILE{$g(\boldsymbol{H}^{(i - 1)}) \le \epsilon$}
                \STATE~Let $\boldsymbol{x}^{(i)}$ be the normalized eigenvector of $\boldsymbol{H}^{(i -1)}$ corresponding to the largest eigenvalue, and $\boldsymbol{X}^{(i)} \gets \boldsymbol{x}^{(i)}{\boldsymbol{x}^{(i)}}^H$. 
                \STATE~Solve $\boldsymbol{G}^{(i)}$ and $\gamma^{(i)}$ from problem~\eqref{prob:cov-est-maxmax-frac-progm} with $\boldsymbol{X} = \boldsymbol{X}^{(i)}$, and obtain $\boldsymbol{H}^{(i)} \gets \boldsymbol{G}^{(i)}/\gamma^{(i)}$. 
                \STATE~$i\gets i + 1$. 
            \ENDWHILE
            \RETURN~The estimated matrix $\hat{\boldsymbol{H}}\gets\boldsymbol{H}^{(i - 1)}$. 
        \end{algorithmic}
    \end{algorithm}
    \vspace{-6pt}

    The proposed solution for Problem~\eqref{prob:cov-est-find-origin} for the case of $b\ge 2$ is summarized in Algorithm~\ref{alg:RX-est-multiary}. 
    The convergence of the proposed algorithm is analyzed as follows. 
    In Algorithm~\ref{alg:RX-est-multiary}, variables $\boldsymbol{H}$ and $\boldsymbol{x}$ are updated as $\boldsymbol{H}^{(i)}$ and $\boldsymbol{x}^{(i)}$ in the $i$-th iteration. 
    It is worth noticing that $\boldsymbol{x}^{(i)}$ maximizes $f(\boldsymbol{H}, \boldsymbol{x})$ given $\boldsymbol{H} = \boldsymbol{H}^{(i - 1)}$ and thus 
    \begin{equation}
        g(\boldsymbol{H}^{(i - 1)}) = \max_{\|\boldsymbol{x}\|_2 \le 1} f(\boldsymbol{H}^{(i - 1)}, \boldsymbol{x}) = f(\boldsymbol{H}^{(i - 1)}, \boldsymbol{x}^{(i)})
    \end{equation}
    holds for $i = 1, \ldots, I$. 
    Moreover, $\boldsymbol{H}^{(i)}$ is the optimal solution for Problem~\eqref{prob:cov-est-maxmax-cov} with $\boldsymbol{X} = \boldsymbol{X}^{(i)}$, which means 
    \begin{equation}
        \begin{aligned}
            f(\boldsymbol{H}^{(i)}, \boldsymbol{x}^{(i)}) & = \frac{\text{tr}(\boldsymbol{H}^{(i)}\boldsymbol{X}^{(i)})}{\text{tr}(\boldsymbol{H}^{(i)})} \ge \frac{\text{tr}(\boldsymbol{H}^{(i - 1)}\boldsymbol{X}^{(i)})}{\text{tr}(\boldsymbol{H}^{(i - 1)})} \\
            & = f(\boldsymbol{H}^{(i - 1)}, \boldsymbol{x}^{(i)}), \ i = 1, \ldots, I, 
        \end{aligned}
    \end{equation}
    where $I$ denotes the number of iterations. 
    Thus, we have 
    \begin{subequations}\label{eq:ratio-func-monotonic}
        \begin{align}
            g(\boldsymbol{H}^{(i)}) & = \max_{\|\boldsymbol{x}\|_2 \le 1} f(\boldsymbol{H}^{(i)}, \boldsymbol{x}) \ge f(\boldsymbol{H}^{(i)}, \boldsymbol{x}^{(i)}) \\
            & \ge f(\boldsymbol{H}^{(i - 1)}, \boldsymbol{x}^{(i)}) \\
            & = g(\boldsymbol{H}^{(i - 1)}), \ i = 1, \ldots, I. 
        \end{align}
    \end{subequations}
    Therefore, the eigenvalue-ratio function $g(\boldsymbol{H}^{(i)})$ is non-decreasing during the iterations for $i = 0, 1, \ldots, I$. 
    Since $g(\boldsymbol{H})$ is upper-bounded by $1$, the convergence of Algorithm~\ref{alg:RX-est-multiary} is guaranteed. 
    Meanwhile, the monotonic increase of the eigenvalue-ratio function indicates that matrix $\boldsymbol{H}^{(i)}$ gradually approaches a rank-one matrix over the iterations.

\subsection{Solution for $b = 1$}\label{subsec:ratio-max-est-binary}
    Next, we consider the case of $b = 1$. 
    To solve matrix $\bar{\boldsymbol{H}}_r$ from Problem~\eqref{prob:cov-est-find-rank-two}, a similar method to the case of $b\ge 2$ can be applied. 
    Consider the following rank-minimization problem: 
    \begin{subequations}\label{prob:cov-real-est-rank-min}
        \allowdisplaybreaks
        \begin{align}
            & \mathop{\min_{\boldsymbol{H}_r}} \ \text{rank}(\boldsymbol{H}_r) \tag{\ref{prob:cov-real-est-rank-min}} \\
            & ~ \mathrm{s.t.} \ \text{tr}(\boldsymbol{H}_r\boldsymbol{V}_t) = p_t, \ t = 1, \ldots, T, \label{prob:cov-real-est-rank-min-power} \\
            & ~~~~~~ {\boldsymbol{H}_r}\in\mathbb{S}_{+}^{N\times N}\cap\mathbb{R}^{N\times N}. \label{prob:cov-real-est-rank-min-semidefinite}
        \end{align}
    \end{subequations}
    The equivalence between Problems~\eqref{prob:cov-est-find-rank-two} and~\eqref{prob:cov-real-est-rank-min} is analyzed as follows. 
    Obviously, matrix $\bar{\boldsymbol{H}}_r = \text{Re}(\bar{\boldsymbol{H}})$ is feasible to Problem~\eqref{prob:cov-real-est-rank-min}. 
    Any optimal solution for Problem~\eqref{prob:cov-real-est-rank-min}, denoted by $\hat{\boldsymbol{H}}_r$, satisfies $\text{rank}(\hat{\boldsymbol{H}}_r)\le\text{rank}(\bar{\boldsymbol{H}}_r)\le 2$, which means that $\hat{\boldsymbol{H}}_r$ is also a solution for Problem~\eqref{prob:cov-est-find-rank-two}. 
    Thus, the solutions for Problem~\eqref{prob:cov-est-find-rank-two} can be obtained by solving the rank-minimization problem~\eqref{prob:cov-real-est-rank-min} equivalently. 

    Define the generalized eigenvalue-ratio function as
    \begin{equation}
        g_r(\boldsymbol{H}_r) = \frac{\lambda_1(\boldsymbol{H}_r) + \lambda_2(\boldsymbol{H}_r)}{\text{tr}(\boldsymbol{H}_r)}, 
    \end{equation}
    where $\lambda_1(\boldsymbol{H}_r)$ and $\lambda_2(\boldsymbol{H}_r)$ are the first and second largest eigenvalues of $\boldsymbol{H}_r$, respectively. 
    For nonzero $\boldsymbol{H}_r\in\mathbb{S}_{+}^{N\times N}$, we have $\text{tr}(\boldsymbol{H}_r) > 0$, $0 \le g_r(\boldsymbol{H}_r)\le 1$, and $g_r(\boldsymbol{H}_r) = 1$ holds if and only if $\text{rank}(\boldsymbol{H}_r)\le 2$. 
    Therefore, solving Problem~\eqref{prob:cov-real-est-rank-min} is equivalent to maximizing $g_r(\boldsymbol{H}_r)$ subject to constraints~\eqref{prob:cov-real-est-rank-min-power} and~\eqref{prob:cov-real-est-rank-min-semidefinite}. 
    Furthermore, since we have
    \begin{subequations}\label{prob:largest-two-eigenvalues}
        \begin{align}
            \lambda_1(\boldsymbol{H}_r) + \lambda_2(\boldsymbol{H}_r) = & \mathop{\max_{\boldsymbol{x}_1, \boldsymbol{x}_2}} \ \boldsymbol{x}_1^T\boldsymbol{H}_r\boldsymbol{x}_1 + \boldsymbol{x}_2^T\boldsymbol{H}_r\boldsymbol{x}_2 \tag{\ref{prob:largest-two-eigenvalues}} \\
            & ~ \mathrm{s.t.} \ \|\boldsymbol{x}_1\|_2 \le 1, \ \|\boldsymbol{x}_2\|_2\le 1, \label{prob:largest-two-eigenvalues-normed} \\
            & ~~~~~~ \boldsymbol{x}_1^T\boldsymbol{x}_2 = 0, \label{prob:largest-two-eigenvalues-orthogonal}
        \end{align}
    \end{subequations}
    Problem~\eqref{prob:cov-real-est-rank-min} can be equivalently written as 
    \begin{subequations}\label{prob:cov-est-maxmax-real}
        \begin{align}
            & \mathop{\max_{\boldsymbol{H}_r}\max_{\boldsymbol{x}_1, \boldsymbol{x}_2}} \ f_r(\boldsymbol{H}_r, \boldsymbol{x}_1, \boldsymbol{x}_2) = \frac{\boldsymbol{x}_1^T\boldsymbol{H}_r\boldsymbol{x}_1 + \boldsymbol{x}_2^T\boldsymbol{H}_r\boldsymbol{x}_2}{\text{tr}(\boldsymbol{H}_r)} \tag{\ref{prob:cov-est-maxmax-real}} \\
            & ~ \mathrm{s.t.} \ \eqref{prob:largest-two-eigenvalues-normed},~\eqref{prob:largest-two-eigenvalues-orthogonal},~\eqref{prob:cov-real-est-rank-min-power},~\eqref{prob:cov-real-est-rank-min-semidefinite}, \nonumber
        \end{align}
    \end{subequations}
    Alternating optimization can be applied to solve Problem~\eqref{prob:cov-est-maxmax-real} sub-optimally. 
    Given $\boldsymbol{H}_r$, the optimal $\boldsymbol{x}_1$ and $\boldsymbol{x}_2$ can be obtained as the normalized eigenvectors corresponding to the first and second largest eigenvalues of $\boldsymbol{H}_r$, respectively. 
    Given $\boldsymbol{x}_1$ and $\boldsymbol{x}_2$, $\boldsymbol{H}_r$ can be optimized via
    \begin{equation}\label{prob:cov-est-maxmax-real-cov}
            \mathop{\max_{\boldsymbol{H}_r}} \ \frac{\text{tr}(\boldsymbol{H}_r\boldsymbol{X})}{\text{tr}(\boldsymbol{H}_r)}, 
            ~~ \mathrm{s.t.} \ \eqref{prob:cov-real-est-rank-min-power},\eqref{prob:cov-real-est-rank-min-semidefinite}, 
    \end{equation}
    with $\boldsymbol{X} = \boldsymbol{x}_1\boldsymbol{x}_1^T + \boldsymbol{x}_2\boldsymbol{x}_2^T$. 
    This problem is also a fractional programming problem and can be solved in the same way as Problem~\eqref{prob:cov-est-maxmax-cov}. 
    The proposed solution for Problem~\eqref{prob:cov-est-find-rank-two} for the case of $b = 1$ is summarized in Algorithm~\ref{alg:RX-est-binary}, and its convergence can be guaranteed by the monotonic increase of the generalized eigenvalue-ratio function $g_r(\boldsymbol{H}_r^{(i)})$, similar to Algorithm~\ref{alg:RX-est-multiary}. 
    Due to the semidefinite poragmming applied for the optimization of $\boldsymbol{H}$ and $\boldsymbol{H}_r$, both Algorithms~\ref{alg:RX-est-multiary} and~\ref{alg:RX-est-binary} have a computational complexity of $\mathcal{O}(N^{4.5}I)$. 

    \vspace{-6pt}
    \begin{algorithm}[h]
        \caption{Proposed solution for Problem~\eqref{prob:cov-est-find-rank-two}. }
        \begin{algorithmic}[1]\label{alg:RX-est-binary}
            \REQUIRE~$\{\boldsymbol{v}_t$, $t = 1, \ldots, T\}$, $\boldsymbol{p}\in\mathbb{R}^{T\times 1}$, ratio threshold $\epsilon$. 
            \STATE~Initialization: Solve $\boldsymbol{H}_r^{(0)}$ via trace-minimization relaxation~\cite{ref:PhaseLift}; iteration index $i\gets 1$. 
            \WHILE{$g_r(\boldsymbol{H}_r^{(i - 1)}) \le \epsilon$}
                \STATE~Let $\boldsymbol{x}_1^{(i)}$ and $\boldsymbol{x}_2^{(i)}$ be the normlized eigenvectors of $\boldsymbol{H}_r^{(i - 1)}$ corresponding to the first and second largest eigenvalues, and $\boldsymbol{X}^{(i)} \gets \boldsymbol{x}_1^{(i)}{\boldsymbol{x}_1^{(i)}}^T + \boldsymbol{x}_2^{(i)}{\boldsymbol{x}_2^{(i)}}^T$. 
                \STATE~Solve $\boldsymbol{H}_r^{(i)}$ from problem~\eqref{prob:cov-est-maxmax-real-cov} with $\boldsymbol{X} = \boldsymbol{X}^{(i)}$. 
                \STATE~$i\gets i + 1$. 
            \ENDWHILE
            \RETURN~The estimated matrix $\hat{\boldsymbol{H}_r}\gets\boldsymbol{H}_r^{(i - 1)}$. 
        \end{algorithmic}
    \end{algorithm}
    \vspace{-6pt}

\section{Performance Evaluation}\label{sec:performance-evaluation}
    \subsection{Simulation Setup}\label{subsec:setup}
        In the simulation, the BS and the IRS are located at $(50, -200, 20)$ and $(-2, -1, 0)$ in meters (m) in a three-dimensional coordinate system, respectively. 
        The location of the user is randomly generated as $(x_{u}, y_{u}, 0)$ with $0\le x_u, y_u\le 10$. 
        The size of IRS is set as $N_x\times N_z = 8\times 8 = 64$, and thus we have $N = 64 + 1 = 65$. 
        The path loss coefficient for all channels is given by $\eta = C_{0}d^{-\alpha}$, where $d$ is the signal propagation distance. 
        Additionally, $C_{0}$ and $\alpha$ are the channel gain at the reference distance of $1$ m and the path-loss exponent, which are denoted for the BS-user, BS-IRS, and IRS-user links as $C_{0, BU}$ and $\alpha_{BU}$, $C_{0, BI}$ and $\alpha_{BI}$, and $C_{0, IU}$ and $\alpha_{IU}$, respectively. 
        The corresponding path loss coefficients are denoted as $\eta_{BU}$, $\eta_{BI}$ and $\eta_{IU}$, respectively. 
        For all channels, Rician fading is assumed with the Rician factor denoted as $\beta_{BU}$, $\beta_{BI}$ and $\beta_{IU}$ for the BS-user, BS-IRS and IRS-user links, respectively. 
        Specifically, the expression of the BS-IRS channel vector $\boldsymbol{g}$ is given below as an example:
        \begin{equation}\label{def:rician-channel}
            \boldsymbol{g} = \sqrt{\frac{\beta_{BI}}{1 + \beta_{BI}}}\boldsymbol{g}^{\text{LoS}} + \sqrt{\frac{1}{1 + \beta_{BI}}}\boldsymbol{g}^{\text{NLoS}}. 
        \end{equation}
        Vector $\boldsymbol{g}^{\text{NLoS}} = \sqrt{\eta_{BI}}\tilde{\boldsymbol{g}}$ is the Gaussian non-line-of-sight (NLoS) component with $\tilde{\boldsymbol{g}}\sim\mathcal{CN}(\boldsymbol{0}, \boldsymbol{I})$ and $\boldsymbol{g}^{\text{LoS}}$ is the deterministic line-of-sight (LoS) component given by 
        \begin{equation}\label{def:mmW-channel}
                \boldsymbol{g}^{\text{LoS}} = \sqrt{\eta_{BI}}\boldsymbol{b}_{N}(\omega, \psi). 
        \end{equation}
        Vector $\boldsymbol{b}_{N}(\omega, \psi)\in\mathbb{C}^{N\times 1}$ is the steering vector of the LoS path from the BS to the IRS, 
        where $\omega\in[0, \pi)$ and $\psi\in[0, \pi)$ are the physical azimuth and elevation angles of arrival (AoAs) at the IRS, respectively. 
        Specifically, define $\boldsymbol{a}_{N}(\phi) = [e^{j\pi 0\phi}, \ldots, e^{j\pi (N - 1)\phi}]^T\in\mathbb{C}^{N\times 1}$ as the $N$-dimensional steering vector. Then, $\boldsymbol{b}_{N}(\omega, \psi)$ is defined as $\boldsymbol{b}_{N}(\omega, \psi) = \boldsymbol{a}_{N_x}(\cos{(\omega)}\sin{(\psi)})\otimes\boldsymbol{a}_{N_z}(\cos{(\psi)})$, where $\otimes$ is the Kronecker product. 
        In addition, we set $C_{0, BU} = -33$ dB, $C_{0, BI} = C_{0, IU} = -30$ dB, $\alpha_{BU} = 3.7$, $\alpha_{BI} = \alpha_{IU} = 2$, $\beta_{BU} = 0$, $\beta_{BI} = 10$, $\beta_{IU} = 1$, $p_0 = 30$ dBm, $\sigma^2 = -90$ dBm, and $\epsilon = 0.95$. 
        Each point in the figures is averaged over $1000$ random user locations and channel realizations. 

    \vspace{-5pt}
    \subsection{Algorithm Convergence}\label{subsec:RX-convergence}
        As discussed in Section~\ref{sec:proposed-RX}, the convergence of the proposed algorithms is guaranteed with the alternating optimization process. 
        In particular, the eigenvalue-ratio function and the generalized eigenvalue-ratio function, i.e., $g(\boldsymbol{H})$ for $b\ge 2$ and $g_r(\boldsymbol{H}_r)$ for $b = 1$, are non-decreasing and upper-bounded by $1$. 
        The values of the ratio functions over the iterations for both $b = 1$ and $b\ge 2$ cases are shown in Fig.~\ref{Fig:RX-convergence}. 
        The total number of power measurements is set as $T = 65$. 
        It can be observed that the ratio functions start from small values with the initializations, and converge to $1$ within $5$ iterations. 
        This verifies that the estimated channel autocorrelation matrices obtained by the proposed algorithms approach a rank-one matrix for $b\ge 2$ or a rank-two matrix for $b = 1$, respectively. 

    \vspace{-5pt}
    \subsection{Channel Autocorrelation Matrix Estimation Error}\label{subsec:est-err}
        In this subsection, the estimation error of the proposed channel autocorrelation matrix estimation algorithms is evaluated. 
        For the case of $b\ge 2$, the channel autocorrelation matrix $\bar{\boldsymbol{H}}$ is estimated as $\hat{\boldsymbol{H}}$, and the normalized estimation error is defined as $\mathcal{E}_b = \|\hat{\boldsymbol{H}} - \bar{\boldsymbol{H}}\|_F^2 / \|\bar{\boldsymbol{H}}\|_F^2$, where $\|\cdot\|_F$ denotes the Frobenius norm of a matrix. 
        For the case of $b = 1$, the matrix $\bar{\boldsymbol{H}}_r$ is estimated as $\hat{\boldsymbol{H}}_r$, and thus the normalized estimation error is defined as $\mathcal{E}_b = \|\hat{\boldsymbol{H}}_r - \bar{\boldsymbol{H}}_r\|_F^2 / \|\bar{\boldsymbol{H}}_r\|_F^2$. 
        
        In Fig.~\ref{Fig:RX-error}, ``Trace-min'' represents the solutions for Problems~\eqref{prob:cov-est-find-origin} and~\eqref{prob:cov-est-find-rank-two} for $b\ge 2$ and $b = 1$, respectively, by employing the trace-minimization relaxation method in~\cite{ref:PhaseLift}. 
        It can be observed that the estimation errors of the proposed algorithms decrease rapidly with the number of power measurements and are much smaller than those of the trace-minimization relaxation method. 
        When the number of power measurements is small, there may exist more than one possible solutions for the channel autocorrelation matrix. 
        In this case, even if a rank-one matrix is estimated based on power measurements, it may not be the actual channel autocorrelation matrix $\bar{\boldsymbol{H}}$ or $\bar{\boldsymbol{H}}_r$, which results in a high estimation error. 
        However, when the number of power measurements is sufficiently large, the solution is unique according to Propositions~\ref{prop:solution-set} and~\ref{prop:solution-set-cov-real}, and it is guaranteed that the estimated rank-one matrix approaches the actual channel autocorrelation matrix. 
        
        \begin{figure}[t]
            \begin{center}
                \includegraphics[scale = 0.45]{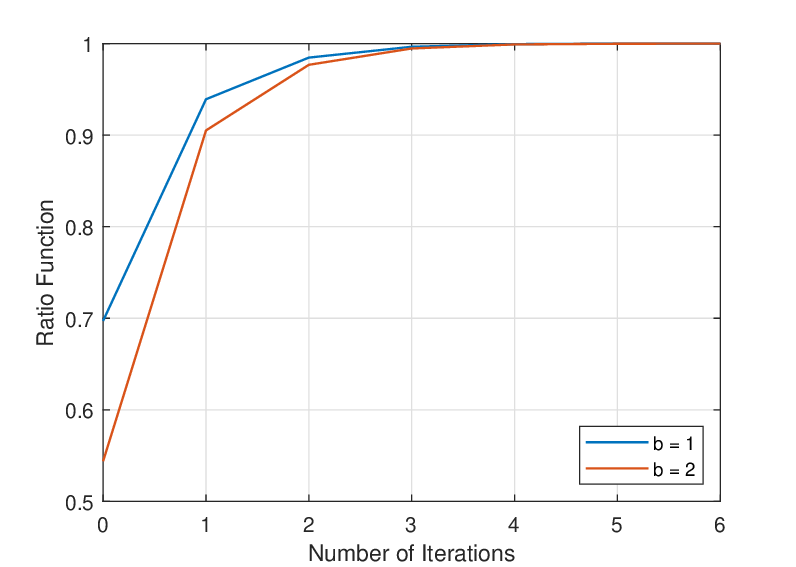}
                \vspace{-7pt}
                \caption{Convergence of the proposed algorithms. }
                \label{Fig:RX-convergence}
            \end{center}
            \vspace{-18pt}
        \end{figure}
        \begin{figure}[t]
            \begin{center}
                \includegraphics[scale = 0.47]{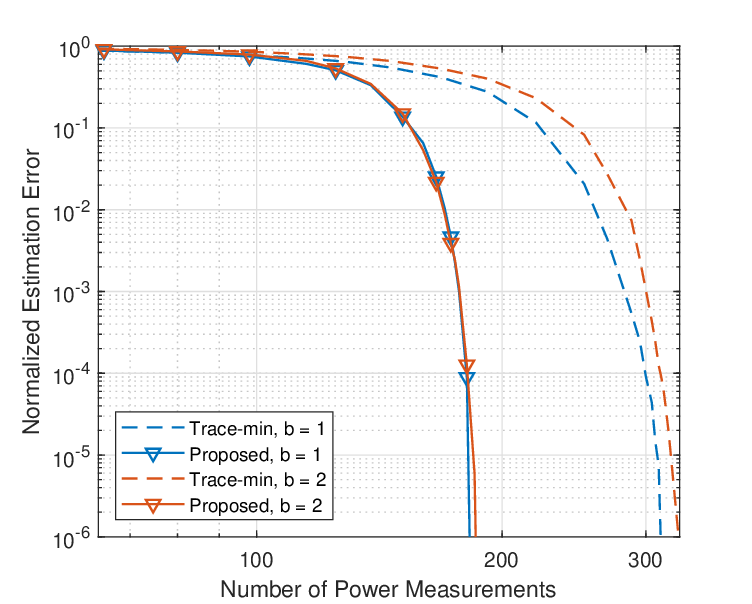}
                \vspace{-7pt}
                \caption{Normalized estimation error of the proposed algorithms. }
                \label{Fig:RX-error}
            \end{center}
            \vspace{-6pt}
        \end{figure}
        
    \subsection{IRS Reflection Design with Estimated Channel}\label{subsec:bf-gain}
        After the channel autocorrelation matrix is estimated, the IRS reflection vector $\boldsymbol{v}$ can be optimized to maximize the effective channel gain, denoted as $\bar{\gamma} = \text{tr}(\bar{\boldsymbol{H}}\boldsymbol{V}) / p_0$, for data transmission. 
        For $b\ge 2$, we apply eigenvalue decomposition to the estimated matrix $\hat{\boldsymbol{H}}$ and define $\hat{\lambda}$ as the largest eigenvalue of $\hat{\boldsymbol{H}}$ and $\hat{\boldsymbol{x}}$ as the corresponding normalized eigenvector. 
        Since $\hat{\boldsymbol{H}}\approx\bar{\boldsymbol{H}}$ is nearly rank-one, the effective channel gain can be approximated as $\bar{\gamma} \approx \text{tr}(\hat{\boldsymbol{H}}\boldsymbol{V}) / p_0 \approx \hat{\lambda}|\hat{\boldsymbol{x}}^H\boldsymbol{v}|^2 / p_0$. 
        Then, the IRS beamforming vector $\boldsymbol{v}$ is optimized to maximize $|\hat{\boldsymbol{x}}^H\boldsymbol{v}|^2$ subject to the discrete phase shift constraint $\boldsymbol{v}\in\Phi_b^N$, which can be solved optimally by using the method proposed in~\cite{ref:optimal-discrete-IRS-vector}. 
        For $b = 1$, similarly, eigenvalue decomposition is applied to the estimated matrix $\hat{\boldsymbol{H}}_r$, where $\hat{\lambda}_1$ and $\hat{\lambda}_2$ denote the first and second largest eigenvalues of $\hat{\boldsymbol{H}}_r$, with $\hat{\boldsymbol{x}}_1$ and $\hat{\boldsymbol{x}}_2$ denoting the corresponding eigenvectors, respectively. 
        As the IRS beamforming vector $\boldsymbol{v}$ is always a real vector for $b = 1$, the effective channel gain can be approximated by 
        \begin{equation}
            \begin{aligned}
                \bar{\gamma} \approx \frac{1}{p_0}\text{tr}(\hat{\boldsymbol{H}}_r\boldsymbol{V}) & \approx \frac{1}{p_0}\left|
                    \hat{\lambda}_1^{\frac{1}{2}}\hat{\boldsymbol{x}}_1^T\boldsymbol{v}
                \right|^2 + \frac{1}{p_0}\left|
                    \hat{\lambda}_2^{\frac{1}{2}}\hat{\boldsymbol{x}}_2^T\boldsymbol{v}
                \right|^2 \\
                & = \frac{1}{p_0}\left|\left(
                    \hat{\lambda}_1^{\frac{1}{2}}\hat{\boldsymbol{x}}_1 + j\hat{\lambda}_2^{\frac{1}{2}}\hat{\boldsymbol{x}}_2
                \right)^H\boldsymbol{v}\right|^2, 
            \end{aligned}
        \end{equation}
        and the optimal vector $\boldsymbol{v}$ can be obtained according to the method proposed in~\cite{ref:optimal-discrete-IRS-vector} {{\color{\highlightcolor}with linear complexity. 
        Therefore, the overall complexity of the IRS reflection design based on the proposed estimation algorithm is still $\mathcal{O}(N^{4.5}I)$. 
        }}

        For comparison, the benchmark schemes for IRS reflection design based on power measurements are listed as follows:
        \textbf{$1)$Upper bound}: 
            The upper bound on the effective channel gain is obtained by optimizing the IRS reflection vector based on the perfect CSI $\boldsymbol{h}$ with the algorithm proposed in~\cite{ref:optimal-discrete-IRS-vector}; 
        \textbf{$2)$RMS} (random-max sampling): 
            A large number of random IRS reflection vectors are applied with $u_n$ uniformly distributed in $\Phi_b$ for $\forall n$ and the one achieving the largest received signal power is used; 
        \textbf{$3)$CSM} (conditional sample mean): 
            This is the method proposed in~\cite{ref:CSM}, where a large number of random IRS reflection vectors are applied, and the empirical expectation of the received signal power is calculated conditioned on $u_n$ fixed at every possible value for $\forall n$. 
            For each element, CSM selects the phase shift that maximizes the empirical expectation conditioned on it. 
        {\color{\highlightcolor}The complexity of RMS and CSM are $\mathcal{O}(N)$ and $\mathcal{O}(NT)$, respectively. }
        
        The effective channel gain obtained based on the estimated channels by the proposed algorithms as well as other benchmark schemes are shown in Fig.~\ref{Fig:RX-gain}. 
        As can be observed, the effective channel gain achieved by the proposed schemes increases rapidly with the number of power measurements, and can approach the upper bound when $T\ge 160$ for both $b = 1$ and $b\ge 2$ cases. 
        As discussed previously, the channel estimation error of the proposed algorithms vanishes quickly when $T$ becomes large, leading to a high effective channel gain. 
        In contrast, the effective channel gains for RMS and CSM increase slowly with $T$, as they do not fully utilize the power measurements for CSI estimation. 
        {\color{\highlightcolor}Thus, although the proposed approach has a higher complexity, it is more efficient than RMS and CSM in improving the effective channel gain because the required number of power measurements to achieve the same performance is much fewer. }

        \begin{figure}[t]
            \begin{center}
                \includegraphics[scale = 0.515]{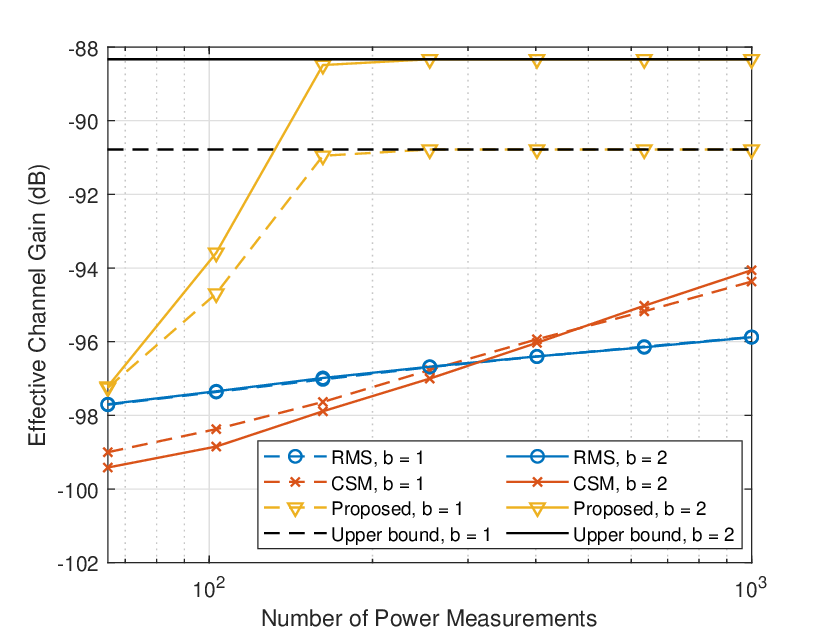}
                \vspace{-7pt}
                \caption{Effective channel gain for different schemes. }
                \label{Fig:RX-gain}
            \end{center}
            \vspace{-10pt}
        \end{figure}

\vspace{-3pt}
\section{Conclusion}\label{sec:conclusion}
    This paper proposed a new approach to estimate the autocorrelation matrix of IRS-cascaded channel by leveraging the existing user power measurement mechanism for IRS-assisted communication systems. 
    This approach is practically appealing, as it does not require any change of the channel estimation/training protocol in current wireless systems. 
    It was shown that the IRS channel autocorrelation matrix can be estimated by solving matrix-rank-minimization problems with the alternating optimization method. 
    Simulation results verified the fast convergence and high accuracy of the proposed estimation algorithms and also demonstrated the effectiveness of IRS reflection design based on the estimated channel in improving the effective channel gain for IRS-assisted systems. 
    Although this paper considered the simple setup of single user with slow-fading channels due to the power measurement overhead and relatively high computational complexity, the proposed approaches can be simplified with approximate solutions and thus are extendable to more general setups with multiple users and slow-varying channel statistics (e.g., in an indoor environment), which will be studied in future work. 

\vspace{-3pt}
\section*{Acknowledgement}\label{sec:acknowledgement}
    This work is supported in part by Shenzhen Research Institute of Big Data with the grant No J00120230006,  MOE Singapore under Award T2EP50120-0024, Advanced Research and Technology Innovation Centre of National University of Singapore under Research Grant R-261-518-005-720, and the Guangdong Provincial Key Laboratory of Big Data Computing.

%
\IEEEpeerreviewmaketitle


\vspace{-2pt}
\bibliographystyle{IEEEtran} 
\bibliography{IEEEabrv, reference}

\end{document}